\newtheorem{Definition}{Definition}
\newtheorem{theorem}{Theorem}[section]
\newtheorem{lemma}[theorem]{Lemma}
\newtheorem{example}[theorem]{Example}
\newtheorem{remark}[theorem]{Remark}
\newtheorem{fact}[theorem]{Fact}
\newcommand{\equalityref}[1]{\hyperref[#1]{Equality~\eqref{#1}}}
\newcommand{\inequalityref}[1]{\hyperref[#1]{Inequality~\eqref{#1}}}
\newcommand{\N}{\mathbb{N}}
\newcommand{\Oh}{\mathcal{O}}
\DeclareMathOperator{\In}{In}
\DeclareMathOperator{\Out}{Out}
\newcommand{\G}{\mathcal{G}}
\title{}
\begin{document}
  \title{Linear-Time Data Dissemination in Dynamic Networks
\titlenote{This work has been supported by the Austrian Science Fund (FWF)
projects ADynNet (P28182) and RiSE (S11405). }
}

\numberofauthors{3}
\author{
\alignauthor
Manfred Schwarz \\ 
       \affaddr{Institute of Computer Engineering}\\
       \affaddr{TU Wien}\\
       \email{mschwarz@ecs.tuwien.ac.at}
\alignauthor
Martin Zeiner \\ 
       \affaddr{Institute of Computer Engineering}\\
       \affaddr{TU Wien}\\
       \email{mzeiner@ecs.tuwien.ac.at}
\alignauthor Ulrich Schmid \\ 
       \affaddr{Institute of Computer Engineering}\\
       \affaddr{TU Wien}\\
       \email{s@ecs.tuwien.ac.at}
}

  \date{\today}
  \maketitle
\begin{abstract}
Broadcasting and convergecasting are pivotal services in distributed
systems, in particular, in wireless ad-hoc and sensor networks, which
are characterized by time-varying communication graphs. We
study the question of whether it is possible to disseminate data
available locally at some process to all $n$ processes in sparsely 
connected synchronous dynamic networks with directed links 
in linear time. Recently, Charron-Bost, F\"ugger and Nowak proved 
an upper bound of $\Oh (n \log n)$ rounds for the case where every
communication graph is an arbitrary directed rooted tree. We present a new
formalism, which not only facilitates a concise proof of this result,
but also allows us to prove that $\Oh(n)$ data dissemination is
possible when the number of leaves of the rooted trees are bounded
by a constant. In the special case of rooted chains, only $(n-1)$ 
rounds are needed. Our approach can also be adapted for undirected
networks, where only $(n-1)/2$ rounds in the case of arbitrary
chain graphs are needed. 
\end{abstract}


\section{Introduction}

We consider a synchronous network of $n$ failure-free nodes with unique 
ids (uids), which 
are connected by directed point-to-point links. The nodes execute a
deterministic algorithm for disseminating some local data (say, 
the uids for simplicity), which shall ensure
that the uid of at least one node becomes known to all nodes in the 
system as fast as possible. In a synchronous distributed system, the execution proceeds in the form of
lock-step rounds $r=1,2,\dots$, where all processes send and receive
round-$r$ messages and simultaneously execute a computing step, which
also starts the next round. 
Communication is unreliable,
though: A \emph{message adversary} \cite{AG13} determines 
which receiver gets a
message from the respective sender in a round: It effectively generates
a sequence $G_1,G_2,\dots$ of directed \emph{communication graphs},
where $G_r$ contains a directed 
edge $(p,q)$ if the message from $p$ is received by $q$ in round $r$.
We assume that the set of nodes and hence $n$ is fixed, whereas
the edges may change over time. Messages may have arbitrary 
size, i.e., we adhere to the LOCAL model \cite{Pel00}.

The particular question asked in this paper is: How many rounds are needed until
the uid of some node is known to all $n$ nodes, for a certain message adversary? We will call this quantity the \emph{dissemination time},
and it is obvious that small dissemination times 
are beneficial for data distribution applications.\footnote{It may be argued that adhering to the LOCAL model is overly simplistic for e.g.\ wireless settings, where the CONGEST model 
(a maximum message size of $O(\log n)$) is usually considered more 
appropriate. However, in the light of our past efforts to solve 
the problem with arbitrary message size, we consider is close to hopeless 
to immediately address the data dissemination in the CONGEST model. A solution
in the LOCAL model, however, might pave the way to the latter also, as it might
suggest ideas for how to properly schedule the processes' activities in
order to avoid congestion, cp.\ \cite{HW12:PODC}.} This even includes consensus
algorithms like \cite{BRS12:sirocco,CG13,SWSBR15:NETYS,SWS16:ICDCN}, since system-wide 
agreement obviously requires system-wide data dissemination. Moreover, 
small dissemination times are also interesting for data aggregation, which 
is a pivotal task in 
wireless sensor networks \cite{ASSC02}. After all, convergecasting is the dual of
broadcasting: By reverting the direction of the links and the sequence
of communication graphs, a successful broadcast becomes a successful 
convergecast.

The dissemination time obviously depends heavily upon the message adversary,
i.e., the actual sequence of communication graphs $G_1,G_2,\dots$: If
e.g.\ $G_1$ contains a star, it is 1, if every graph consists of the same two weakly
connected components, it is $\infty$. We are interested in an upper bound on
the dissemination time for at least sparsely connected communication graphs. 
More specifically, we restrict our attention to the case of an \emph{oblivious}
message adversary \cite{AG13,CG13}, where $G_1,G_2,\dots$ is an arbitrary sequence 
of graphs each drawn from a set ${\cal G}$ of candidate graphs with each 
$G\in {\cal G}$ containing some rooted spanning tree. Note that 
this is actually the weakest per-graph 
restriction that guarantees a finite worst-case dissemination time for oblivious
message adversaries. 

A relatively simple pigeonhole argument (see Lemma~\ref{lem:infl})
yields an upper bound of $\Oh(n^2)$ for the dissemination time in this case.\footnote{For all
the $\Oh(.)$ terms in this paper, the constants can be computed.}
Recently, Charron-Bost, F\"ugger and Nowak improved this to
$\Oh (n \log n)$. We conjecture that this bound can be further
tightened to linear time $\Oh(n)$. 
Albeit we were not yet able to prove or disprove this, despite considerable efforts, 
we establish results in this paper
that back-up our conjecture.

\textbf{Main contributions and paper organization:} After a short discussion of 
related work in Section~\ref{sec:relwork} and a description of our system model 
in Section~\ref{sec:model}, we provide the following results in Section~\ref{sec:dir}:
\begin{compactenum}
\item[(i)] We introduce the concept of influence and covering sets and apply these 
techniques in a novel and very concise proof of the known $\Oh (n\log n)$ upper bound
for arbitrary directed rooted trees.
\item[(ii)] We show that a dissemination time of $\Oh(n)$ can be guaranteed for
directed rooted trees with a constant number of leaves. In the case of
directed rooted paths, i.e., directed rooted trees with only 
one leaf, the dissemination time is only $(n-1)$ rounds.
\item[(iii)] In Section~\ref{sec:undir}, we adapt our approach to undirected
networks and show that only $(n-1)/2$ rounds are needed for the dissemination time 
in the case of arbitrary (undirected) chain graphs. 
\end{compactenum}
Some outlook in Section~\ref{sec:outlook} concludes our paper.

\section{Related Work}\label{sec:relwork}

Research on broadcasting and gossiping has a long history
(see~\cite{HHL88} for a survey) and many variations of these problems have
been studied: For the classical telephone problem we refer to~\cite{FHMP79,
	Wes82,PPS15}, and for radio broadcasting to~\cite{EG06, EGS08} and
references therein. In~\cite{DHSZ06} the authors consider the
rendezvous-communication-model.
The widely-used push/pull/push-pull models~\cite{ES09a} were studied on
several graph classes like the complete graph~\cite{FG85, Pit87}, hypercubes
and Erd\H{o}s-Renyi-graphs~\cite{FPRU90, FM94} random geometric
graphs~\cite{FSS13}, and preferential-attachment graphs~\cite{DFF11}. A
list-based quasi-randomized algorithm has been studied in~\cite{BFHV12,DHL13}.
They have in common that they focus on broadcasting a message from a fixed
source on a (possibly random but) fixed graph to all nodes.

Radio broadcasting in evolving graphs has been investigated in
\cite{CMPS09,KLNOR10,AGKM16}, \cite{DPRSV13} focused on token-forwarding
algorithms, and \cite{GSS14,CCDFPS16} studied push/pull-algorithms
on dynamic graphs. Flooding algorithms have also been studied under
several models, including edge-Markovian and related models
\cite{BCF09,CMMPS10,CST15}.
In the edge-Markovian model an edge present at time $t$ stays present with
probability $p$ and disappears with probability $1-p$ at time $t+1$ and an
absent edge appears with probability $q$ and remains absent with $1-q$. 
Whereas those papers focus on the broadcasting of a single item from a fixed
source to all nodes,~\cite{KLO10, BCF12} also consider $k$-token dissemination
and all-to-all algorithms. All this work above considers undirected communication
graphs, however.

\section{Model}\label{sec:model}
We consider a set of processes $\Pi = \{1,\dots ,n\}$ with uids, connected
by directed point-to-point links. The processes execute a deterministic
full-information protocol for distributing a unique local piece of 
information (for ease of exposition, the uid) to the other processes.
The distributed computation proceeds in an infinite number 
of synchronous lock-step
rounds $r=1,2,\dots$. Each round $r$ consists of a communication-closed 
message exchange, specified by the communication graph $G_r$ determined
by an oblivious message adversary \cite{AG13}, followed
by a simultaneous local computing step at every process.
In a full information protocol, every process sends its complete 
state in every round: If process $p$ receives the state of some different
process $q$ (reached at the end of round $r>0$ resp.\ the initial
state for $r=0$) in round $r+1$, it 
forwards $q$'s state as part of its own state in all 
following rounds.

An execution of our system is just an infinite sequence of 
rounds. It can be uniquely
described by an initial configuration $C_0$, which is
the vector of the initial states (that includes the uid)
of every process, followed by an infinite sequence $\G$ of 
communication graphs $G_1,G_2,,\dots$. The configuration
reached at the end of round $r$, after the computation step,
is denoted by $C_r$.

%
%
Formally, let $G$ be a directed or undirected labeled graph on $n$ vertices and let $\mathcal{G} = (G_r)_{r=1}^{\infty}$ be an infinite sequence of such graphs.
Moreover, let $\sigma_i$ be the finite prefix of $\mathcal{G}$ of length $i$;
we may drop the index if the length is clear from the context.
Let $\In_p(r)$ resp.\  $\Out_p(r)$ denote the set of incoming 
resp.\ outgoing edges of node $p$ in $G_r$. 

Thanks to our full information protocol, every node has knowledge 
$K_p(r)$ at the end of round $r$ (with $K_p(0)$ representing the initial 
knowledge), which adheres to the following rules:
\begin{compactenum}[(1)]
 \item Initial state: $K_p(0) = \{p\}$ for all $p \in \Pi$ (every node knows only its own uid at the beginning).
 \item Updating rule: The knowledge $K_p(r+1)$ process $p$ obtains
at the end of round $r+1$ is its previous knowledge $K_p(r)$ together 
with the information he gets via all incoming edges in $G_{r+1}$, i.e., 
\[
 K_p(r+1) = K_p(r) \cup \bigcup_{q: \ (q,p) \in \In_p (r+1)}  K_q(r) .
\]
\end{compactenum}
Subsequently, we will use phrases like ``$K_p(t)$ at time $t$'' for
integer times $t\geq 0$, which means $K_p(r)$ at (the end of) 
round $r=t$ for $t>0$ and $K_p(0)$ otherwise.

The dissemination time, given a sequence of graphs, is the first time all 
processes learned the uid from a common process, which is formally
defined as follows:
\begin{Definition}
Given a class of graphs $\mathfrak{G}$ on $n$ nodes and an infinite sequence
of graphs $\G\in \mathfrak{G}^\N$, i.e., $G_r \in \mathfrak{G}$ for
$r\geq 1$, the dissemination time in $\G$ is defined as
\begin{equation}\label{equ:brseq}
 B_{\mathcal{G}} = \min \left\{ r: \ \bigcap_{p \in \Pi} K_p(r) \neq \emptyset \right\}.
\end{equation}
The dissemination time of the class $\mathfrak{G}$ is defined as
\begin{equation}\label{equ:brcla}
 B^{\mathfrak{G}} = \max_{\mathcal{G} \in \mathfrak{G}^{\N}} B_{\mathcal{G}}.
\end{equation}
\end{Definition}

In this paper, we restrict our attention to classes of graphs 
where every element is\footnote{Actually, we can immediately generalize
all our results to the case where every element only \emph{contains} 
a rooted tree etc., as additional edges can only speed-up data dissemination.} 
a rooted tree ($\mathfrak{T}$), its subclass consisting of
directed chains ($\mathfrak{C}$), as well as their undirected analoga.

\begin{remark}
Following a commonly used definition \cite{FHMP79, Lab89, Ger94}, we could 
also define a dissemination time $B_{\mathcal{G}}(p) = \min \left\{ r: \ 
\bigcap_{q \in \Pi} K_q(r) = \{p\} \right\}$ for node $p$ in the first place, i.e.,
the time when $p$ becomes known to all nodes in a graph sequence $\G$.
The dissemination time \eqref{equ:brseq} is then 
$B_{\mathcal{G}}= \min_{p \in\Pi} B_{\mathcal{G}}(p)$. Whereas this alternative
expression has been used rarely for dynamic graphs, it has been employed
for static graphs~\cite{Lab89, FHMMM94}; an analogon of~\eqref{equ:brcla} 
has also been studied in~\cite{Lab89}. On the other hand, in the classic 
telephone problems and its variants, one is interested in 
$\max_{p \in\Pi} B_{\mathcal{G}}(p)$.
\end{remark}

\begin{example} To illustrate the definitions above we give a short example (see Figure~\ref{fig:example}).
	$~~$
\newcommand{\impscale}{0.9}
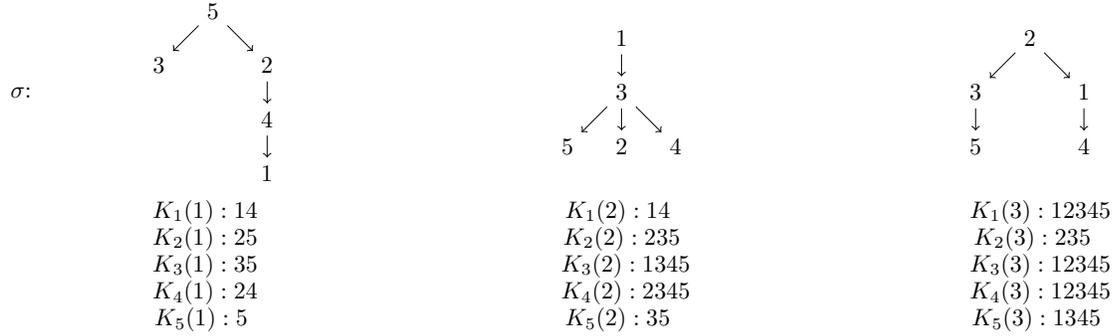
\begin{figure*}[!ht]
 \centering
   \begin{subfigure}{0.10\linewidth}
     \scalebox{\impscale}{
       \begin{tikzpicture}
         \node at (0, 0.5) {$\sigma$:};
       \end{tikzpicture}
     }
   \end{subfigure}
   \begin{subfigure}{.30\linewidth}
     \scalebox{\impscale}{
       \begin{tikzpicture}
         \node (p5) at (0,0) {$5$};
         \node (p2) at (0.8,-0.8) {$2$};
         \node (p3) at (-0.8,-0.8) {$3$};
         \node (p4) at (0.8,-1.6) {$4$};
         \node (p1) at (0.8,-2.4) {$1$};

		  \draw[->] (p5) -- (p3);
		  \draw[->] (p5) -- (p2);
		  \draw[->] (p2) -- (p4);
		  \draw[->] (p4) -- (p1);
       \end{tikzpicture}
     }
   \end{subfigure}
   \begin{subfigure}{.30\linewidth}
     \scalebox{\impscale}{
       \begin{tikzpicture}
         \node (p1) at (0,0) {$1$};
         \node (p3) at (0.0,-0.8) {$3$};
         \node (p5) at (-0.8,-1.6) {$5$};
         \node (p4) at (0.8,-1.6) {$4$};
         \node (p2) at (0.0,-1.6) {$2$};

		  \draw[->] (p1) -- (p3);
		  \draw[->] (p3) -- (p2);
		  \draw[->] (p3) -- (p4);
		  \draw[->] (p3) -- (p5);
       \end{tikzpicture}
     }
   \end{subfigure}
   \begin{subfigure}{.25\linewidth}
     \scalebox{\impscale}{
       \begin{tikzpicture}
         \node (p2) at (0,0) {$2$};
         \node (p1) at (0.8,-0.8) {$1$};
         \node (p3) at (-0.8,-0.8) {$3$};
         \node (p4) at (0.8,-1.6) {$4$};
         \node (p5) at (-0.8,-1.6) {$5$};

		  \draw[->] (p2) -- (p3);
		  \draw[->] (p2) -- (p1);
		  \draw[->] (p1) -- (p4);
		  \draw[->] (p3) -- (p5);
       \end{tikzpicture}
     }
   \end{subfigure}\\
   \begin{subfigure}{.10\linewidth}
     \scalebox{\impscale}{
       \begin{tikzpicture}
       \end{tikzpicture}
     }
   \end{subfigure}
   \begin{subfigure}{.30\linewidth}
     \scalebox{\impscale}{
       \begin{tikzpicture}
         \node (p1) at (0,0)    {$K_1(1):14~~~$};
         \node (p2) at (0,-0.4) {$K_2(1):25~~~$};
         \node (p3) at (0,-0.8) {$K_3(1):35~~~$};
         \node (p4) at (0,-1.2) {$K_4(1):24~~~$};
         \node (p5) at (0,-1.6) {$K_5(1):5~~~~$};
       \end{tikzpicture}
     }
   \end{subfigure}
   \begin{subfigure}{.30\linewidth}
     \scalebox{\impscale}{
       \begin{tikzpicture}
         \node (p1) at (0,0)    {$K_1(2):14~~~$};
         \node (p2) at (0,-0.4) {$K_2(2):235~~$};
         \node (p3) at (0,-0.8) {$K_3(2):1345~$};
         \node (p4) at (0,-1.2) {$K_4(2):2345~$};
         \node (p5) at (0,-1.6) {$K_5(2):35~~~$};
       \end{tikzpicture}
     }
   \end{subfigure}
   \begin{subfigure}{.25\linewidth}
     \scalebox{\impscale}{
       \begin{tikzpicture}
         \node (p1) at (0,0)    {$K_1(3):12345$};
         \node (p2) at (0,-0.4) {$K_2(3):235~~$};
         \node (p3) at (0,-0.8) {$K_3(3):12345$};
         \node (p4) at (0,-1.2) {$K_4(3):12345$};
         \node (p5) at (0,-1.6) {$K_5(3):1345~$};
       \end{tikzpicture}
     }
   \end{subfigure}
 \caption{example execution with knowledge sets}\label{fig:example}
\end{figure*}

The dissemination times are $B_{\mathcal{G}}=B_{\mathcal{G}}(3)=B_{\mathcal{G}}(5) = 3$.
\end{example}

\section{Rooted Trees}\label{sec:dir}

In this section, we will present our main results on rooted trees
$\mathfrak{T}$.
We use influence sets (see~\cite[Lemma 3.2. (b)]{KLO10}) for
this purpose, which are are dual to knowledge sets:
While the knowledge set $K_p(r)$ describes which processes node $p$ has already
heard of at the end of round $r$, the influence set $S_p(r)$ describes which 
processes have already heard of $p$:
\begin{Definition}
The influence set $S_p (r)$ of process $p$ at time $r$ is
the set of processes that know about $p$ at time $r$, i.e.,
$S_p (r) = \{ q \in \Pi : \ p \in K_q (r) \}$.
\end{Definition}

\subsection{Influence Sets and Coverings}

Obviously, there are always $n$ influence sets, and each node can be element
of multiple of these. In the following lemma, we collect some
elementary properties of influence sets.

\begin{lemma}\label{lem:infl} For all $p \in \Pi$ and $r \geq 0$, we have:
 \begin{compactenum}[(i)]
  \item Initial state: $S_p(0)= \{p\}$.
  \item Updating rule: $S_p(r+1) = S_p (r) \cup \bigcup_{q \in S_p(r)} \{ q': \ (q,q') \in \Out_q(r+1) \}$.
  \item Given $\G$, the dissemination time $B_{\mathcal{G}} = \min \{ r: \ \max_p |S_p(r)| = n \}$.
  \item $S_p(r) \subseteq S_p(r+1)$ for all $p$ and all $r$.
  \item\label{lem:infl:rootgrows} If $p$ is the root in $G_{r+1} \in \mathcal{G}$, $p\in S_q(r)$, and $|S_q(r)|<n$ then $|S_q (r+1)| > |S_q(r)|$.
 \end{compactenum}
\end{lemma}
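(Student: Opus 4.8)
The plan is to obtain parts (i)--(iv) directly from the definition $S_p(r)=\{q:\ p\in K_q(r)\}$ together with the knowledge-updating rule, essentially by dualizing, and to treat (v) by a separate reachability argument. For (i), $S_p(0)=\{q:\ p\in K_q(0)\}=\{q:\ p\in\{q\}\}=\{p\}$, since $p\in\{q\}$ holds exactly when $q=p$. For (ii), I would unfold $q'\in S_p(r+1)$ as $p\in K_{q'}(r+1)$ and substitute the updating rule for knowledge sets: $p\in K_{q'}(r+1)$ holds iff $p\in K_{q'}(r)$ or there is an edge $(q,q')\in\In_{q'}(r+1)$ with $p\in K_q(r)$. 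Rewriting $p\in K_{q'}(r)$ as $q'\in S_p(r)$, the edge condition $(q,q')\in\In_{q'}(r+1)$ as $(q,q')\in\Out_q(r+1)$, and $p\in K_q(r)$ as $q\in S_p(r)$, yields exactly the claimed union. Part (iv) is then immediate, since (ii) writes $S_p(r+1)$ as $S_p(r)$ joined with further elements; alternatively it follows from $K_q(r)\subseteq K_q(r+1)$. For (iii), observe that $\bigcap_{p\in\Pi}K_p(r)\neq\emptyset$ means some node $x$ lies in every $K_p(r)$, i.e.\ $S_x(r)=\Pi$, i.e.\ $|S_x(r)|=n$; since $|S_p(r)|\leq n$ always, this is equivalent to $\max_p|S_p(r)|=n$, so the two minima range over the same set of rounds and coincide.

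The substance is in (v), which I would establish by exhibiting an edge of $G_{r+1}$ that leaves $S_q(r)$. Since $|S_q(r)|<n$, the complement $\Pi\setminus S_q(r)$ is nonempty; pick any $w$ in it. By hypothesis $p\in S_q(r)$ while $p$ is the root of the rooted tree $G_{r+1}$, so there is a directed path in $G_{r+1}$ from $p$ to $w$; note $p\neq w$, so this path has length at least one. The path starts inside $S_q(r)$ (at $p$) and ends outside it (at $w$), hence it must contain a ``crossing'' edge $(a,b)\in\Out_a(r+1)$ with $a\in S_q(r)$ but $b\notin S_q(r)$. Applying the updating rule (ii) with $a\in S_q(r)$ gives $b\in S_q(r+1)$, and together with monotonicity (iv) this shows $b\in S_q(r+1)\setminus S_q(r)$, whence $|S_q(r+1)|>|S_q(r)|$.

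The only real obstacle is making the reachability step in (v) precise: one must invoke the defining property of the root of a directed rooted tree, namely that it has a directed path to every vertex, and then use the elementary fact that a directed path from a vertex inside a set $S$ to a vertex outside $S$ traverses at least one edge leaving $S$. The remaining parts require only careful bookkeeping of the knowledge/influence duality and pose no difficulty.
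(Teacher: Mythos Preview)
Your proposal is correct and follows essentially the same approach as the paper: parts (i)--(iv) are dispatched as immediate consequences of the definitions, and for (v) you consider the nonempty complement $\Pi\setminus S_q(r)$, use that the root $p$ has a directed path to every node, and extract a crossing edge from $S_q(r)$ to its complement. Your write-up is simply more explicit than the paper's (which merely says ``there must be an edge from $S_q(r)$ to $X$''), but the argument is the same.
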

\begin{proof}
Properties (i) -- (iv) are an immediate consequence of the definition. Let us prove (v): 
Define $X:=\Pi \setminus S_q(r)$ as the nonempty set of nodes which don't know from $q$. Since $p$ is the root, for all $v \in \Pi$ (and in particular for all $v\in X$) there exists
a path from $p$ to $v$  in $G_r$. Thus there must be an edge from $S_q(r)$ to $X$ and hence $S_q(r)$ grows at least by $1$.
\end{proof}

From Property~(v) of this lemma, we obtain directly the trivial $\Oh (n^2)$-bound on the dissemination time for rooted trees $\mathfrak{T}$:
By the pigeonhole principle, after $n(n-2)+1$ rounds, one node was at least $n-1$ times the root and hence its influence set has size $n$. 

Whereas influence sets will turn out to be sufficient for establishing 
our results on chain graphs, 
we need the extended concept of coverings for dealing with general rooted trees.

\begin{Definition} For $r\geq t$ let $$\mathcal{C}_{I(t)}(r) = \{ S_p(r) \ | \ p \in I(t) \}$$
be a class of influence sets, for some given index set $I(t) \subseteq \Pi$.
It is called covering if
$\bigcup_{S_p(r) \in \mathcal{C}_{I(t)}(r)} S_p(r) = \Pi$.
The influence sets that make up a covering are called covering sets. 
The size of a covering is the number of covering sets it consists of, i.e., the size of its index set $I(t)$.
(For sake of simplicity we will drop the index sometimes in the following if $r=t$ and the index set is clear from the context.)


A sequence of coverings $(\mathcal{C}_{I(r)}(r))_{r \geq 0}$ with the additional property $I(r+1) \subseteq I(r)$ we denote by $\mathfrak{C}$.
\end{Definition}

Clearly, a trivial example of a covering is the set of all influence sets. To exclude such
trivial cases, we introduce a subclass of coverings that contain no redundant sets.

\begin{Definition}
A strict covering $\mathcal{SC}_{I(r)}(r)$ is a covering with the property that $\mathcal{SC}(r) \setminus \{S\}$ $\forall S \in \mathcal{SC}(r)$ is not a covering.
 A unique node is a node that is element of only one covering set.
\end{Definition}

The following lemma states some useful properties of coverings:

\begin{lemma}\label{lem:cov} In the case of rooted trees 
$\mathfrak{T}$, every covering satisfies the following properties:
\begin{compactenum}[(i)]
  \item A covering is a strict covering iff each covering set contains a unique node.
  \item\label{lem:cov:ii} In a strict covering every covering set, except possibly one, loses at least one of its unique nodes. 
	The only covering set that may not lose one of its unique nodes is the
	one of the root of $G_r$.
  \item\label{lem:cov:red} Let $\mathcal{C}_{I(r)}(r)$ be a strict covering at time $r$.  
Assume that at time $r+t$, for some $t>0$, there exists a covering set $S$ in $X:=\mathcal{C}_{I(r)}(r+t)$ containing no unique node.
Then, $X':=X\setminus \{S\}$ is still a covering and $X'$ has at most $|S|$ more unique nodes than $X$.
By repeating this argument,
one can reduce $X$ to a new strict
covering $\mathcal{SC}_{I(r+t)}(r+t)$ with strictly smaller 
index set $I(r+t) \subset I(r)$.
 \item Let $\ell$ be the number of leaves in $G_r$. In a strict covering $\mathcal{SC}_{I(r-1)}(r-1)$, at most $\ell$ influence sets do not grow in round~$r$.
 \item If a strict covering consists of only one set, then dissemination has been completed.
 \item\label{lem:cov:ZA} At time $t=2$, there is always a covering consisting of covering sets of size at least $2$. 
  \item\label{lem:cov:onesetgrows} For each covering $\mathcal{C}_{I(r)}(r)$ there exists a $p\in I(r)$ with $|S_p(r+1)|>|S_p(r)|$.
\end{compactenum}
\end{lemma}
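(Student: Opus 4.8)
The plan is to first isolate the single structural fact that drives every part: in a rooted tree each non-root node has exactly one incoming edge, so Lemma~\ref{lem:infl}(ii) specializes to the clean update $S_p(r+1)=S_p(r)\cup\{v:\pi(v)\in S_p(r)\}$, where $\pi$ denotes the parent map of $G_{r+1}$. Combined with monotonicity (Lemma~\ref{lem:infl}(iv)), this shows that the number of covering sets containing any fixed node can only increase from one round to the next; hence a node that is unique to some covering set either stays unique or is absorbed by another set, and no new unique nodes are ever created. Parts (i) and (v) then need no tree structure at all: if some covering set had no unique node its deletion would leave the union unchanged, contradicting strictness, and conversely the unique node witnesses non-redundancy; and a one-set strict covering must equal $\Pi$, so $\max_p|S_p|=n$ and Lemma~\ref{lem:infl}(iii) gives completion.

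For (ii) I would argue by contradiction using the parent map. A unique node $u$ of $S_{p_i}$ is absorbed precisely when $\pi(u)$ lies in some other covering set; so ``losing nothing'' means that for every unique node $u$ of $S_{p_i}$ its parent $\pi(u)$ is again a unique node of $S_{p_i}$ (or $u$ is the root, which has no parent). Walking up the parent chain in the rooted tree $G_r$, which necessarily terminates at the root $\rho$, forces $\rho$ itself to be a unique node of $S_{p_i}$. Since a node can be unique to at most one covering set, at most one covering set can have this property, and it is exactly the one owning $\rho$. This pins down the exceptional set as the root's, as claimed.

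Part (iv) is where I expect the real work. A covering set fails to grow in round $r$ iff $S_{p_i}(r-1)$ is closed under children in $G_r$, i.e.\ downward closed. The idea is to inject the non-growing sets into the leaves: take the unique node $u_i$ of each non-growing $S_{p_i}$ (which exists by (i)) and send $S_{p_i}$ to any leaf $\lambda_i$ lying below $u_i$, which is contained in $S_{p_i}$ by downward closure. Injectivity is the crux: if two non-growing sets shared a leaf, their unique nodes $u_i,u_j$ would both be ancestors of that leaf, hence comparable in the tree, so one would be a descendant of the other and therefore lie in the other (downward-closed) set, contradicting uniqueness. Thus the number of non-growing sets is at most the number of leaves $\ell$.

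Finally I would dispatch the remaining parts. For (iii), deleting a covering set $S$ that has no unique node leaves the union unchanged, so $X'$ is still a covering; the only status changes are nodes of $S$ that were in exactly two sets and now become unique, so at most $|S|$ unique nodes are gained and none are lost, and iterating until no deletable set remains yields a strict covering (by (i)) on a strictly smaller index set. For (vi) I would note that at time $2$ one has $S_p(2)\supseteq\{p\}\cup\mathrm{ch}_{G_1}(p)\cup\mathrm{ch}_{G_2}(p)$, so $|S_p(2)|=1$ exactly when $p$ is a leaf in both $G_1$ and $G_2$; such a $p$ is then a $G_2$-child of $\pi(p)$, whose influence set has size at least $2$ and contains $p$, so the family of all size-$\geq 2$ influence sets already covers $\Pi$. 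For (vii), the root $\rho$ of $G_{r+1}$ lies in some covering set $S_p(r)$ by the covering property, and Lemma~\ref{lem:infl}(v) makes that set grow (unless it already equals $\Pi$, in which case dissemination is complete). The main obstacles are the injectivity argument in (iv) and the correct identification of the exceptional root set in (ii); the rest is bookkeeping on the update rule and monotonicity.
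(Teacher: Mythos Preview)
Your proposal is correct and follows essentially the same approach as the paper. The only cosmetic differences are: in (iv) you phrase the bound as an explicit injection from non-growing sets into leaves (via a chosen unique node and a descendant leaf), whereas the paper argues dually that each root-to-leaf path carries unique nodes of at most one non-growing set; and in (vi) you argue directly at $t=2$ using both $G_1$ and $G_2$, while the paper establishes the covering already at $t=1$ using only $G_1$ (which suffices by monotonicity). Your added caveat in (vii) about the degenerate case $S_p(r)=\Pi$ is a fair observation that the paper leaves implicit.
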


\begin{proof}

The properties (i) and (v) are obvious.

(ii) Let $q$ be a unique node. Then $q$ is a) the root, or b) has a predecessor which is only in the same influence as $q$ (and thus unique too), or c) has a predecessor which is in another
influence set too. In case c) $q$ is not unique anymore. In case b) we repeat the argument with the predecessors of $q$ until we stop in case c) or a). 
If we stop in case c), then one of the predecessor of $q$ is not unique
anymore. If we stop in case a) (and  not in c)!) then $q$ and all its predecessors remain unique. Thus the only influence set which may not lose one of its
unique nodes is the set of the root.

(iii) Since $S$ contains no unique nodes all these nodes most be contained in other sets too, and thus we can remove $S$ from $X$ still having a covering.
Repeating this procedure leads to a covering $X'$ containing no unique nodes. Hence it is a strict covering.

(iv) Note that an influence $S$ which does not grow has the property that for all $p \in S$ also all successors of $p$ must be contained in $S$. Moreover, each set in $\mathcal{SC}(r-1)$ contains
a unique node. Now let $P_i$ denote the unique path from the root to the leaf $l_i$ ($1\leq i \leq \ell$). We will show that each path $P_i$ contains unique nodes from at most one
non-growing influence set. Let $p$ be a unique node of a non-growing influence set $S$. Then $S$ contain all successors of $p$ and those nodes can not be unique nodes from another
non-growing set. On the other hand, if there is a unique node $p'$ from another non-growing set $S'$ on the path from the root to $p$ then $S'$ would contain all successors from $p'$, hence
also $p$, and $p$ would not be a unique node. Consequently there can be at most $\ell$ non-growing covering sets.

(vi) The only influence sets of size $1$ after round $1$ are the sets $S_{l_1} (1),\dots ,S_{l_\ell}(1)$ where the $l_i$ are the leaves in the tree $G_1$. But node $l_i$ is surely
contained in the influence set of ist predecessor. So take any covering that does not contains the influence sets of leaves but the influence sets of predecessors of leaves.

(vii) Since $\mathcal{C}(r)$ is a covering, one set must contain the root. By Lemma~\ref{lem:infl}~\eqref{lem:infl:rootgrows}, this set grows.
\end{proof}

\subsection{Bounds on dissemination time}

Equipped with the properties from Lemma~\ref{lem:cov}, we can now give a novel, concise proof of the
$\Oh(n\log n)$-bound established in~\cite{CFN15:ICALP,CBS06}.
\begin{fact}[{\cite[Lemma 4]{CFN15:ICALP} and~\cite[Lemma 1]{CBS06}}]
For the class $\mathfrak{T}$ of rooted trees, dissemination is completed within 
$B^{\mathfrak{T}} = \Oh(n\log n)$ rounds.
\end{fact}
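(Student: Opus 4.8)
The plan is to track the evolution of a sequence of strict coverings over time and argue that their size (the number of covering sets) drops by a constant factor every $\Oh(n)$ rounds. Since the size of any covering is at most $n$ and dissemination is complete once the covering collapses to a single set (Lemma~\ref{lem:cov}(v)), getting the size down to $1$ requires only $\Oh(\log n)$ halvings, yielding the $\Oh(n \log n)$ total. The whole argument is driven by Lemma~\ref{lem:cov}, so the task is really to assemble its pieces into a potential-function/amortized counting scheme.

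\textbf{Setting up the invariant.} First I would fix a starting strict covering; by Lemma~\ref{lem:cov}(vi) we may assume at time $t=2$ a covering all of whose sets have size at least $2$, and by Lemma~\ref{lem:cov}(i) we can make it strict, so every covering set owns at least one unique node. The key quantity to watch is the total number of unique nodes, $U(r) = \sum_{S \in \mathcal{SC}(r)} |\{\text{unique nodes of } S\}|$, which never exceeds $n$. By Lemma~\ref{lem:cov}(ii), in each round every covering set except possibly the one rooted at $G_r$ loses at least one unique node, so the unique-node count drops by at least (size $-1$) each round. Since unique nodes cannot be created without merging or discarding sets, a covering of size $k$ can survive at most roughly $n/(k-1)$ rounds before some set is stripped of all its unique nodes.

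\textbf{Shrinking the covering.} When a covering set loses its last unique node it no longer needs to be present: by Lemma~\ref{lem:cov}(iii) (the redundancy-reduction step) we may delete such a set and re-extract a strict covering with a \emph{strictly smaller} index set. The plan is to phase the argument: while the current strict covering has size in the range $(k/2, k]$, each round costs at least $k/2 - 1$ unique nodes from the non-root sets, so after at most $\Oh(n/k)$ rounds enough sets have been emptied of unique nodes that we can reduce the covering's size below $k/2$. Summing the phase lengths over the geometric sequence $k = n, n/2, n/4, \dots, 1$ gives $\sum_{j} \Oh(n / (n/2^{j})) = \Oh\!\big(\sum_j 2^j\big)$, which is dominated by its last term and hence $\Oh(n)$ per \emph{factor-two} reduction; accumulated over the $\Oh(\log n)$ reductions needed to reach size $1$, this gives $\Oh(n \log n)$ rounds in total. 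Throughout I would use Lemma~\ref{lem:infl}(iii) to translate ``covering of size $1$'' into ``some influence set has reached size $n$,'' i.e.\ $B_{\mathcal{G}} \le \Oh(n\log n)$.

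\textbf{The main obstacle} will be making the amortization honest, namely bounding how many unique nodes are \emph{gained} when we reduce the covering via Lemma~\ref{lem:cov}(iii), since that lemma only guarantees the new covering has at most $|S|$ more unique nodes than before. I would need to show these re-gained unique nodes are paid for by the accompanying decrease in covering size, so that the global progress measure --- size plus a suitably scaled unique-node count --- is monotone. A secondary subtlety is the exceptional root set in Lemma~\ref{lem:cov}(ii): because only that one set is exempt from losing a unique node, and it is the set that Lemma~\ref{lem:cov}(vii) guarantees actually grows, I would argue that the root set's exemption never stalls progress for more than the few rounds captured in the per-phase bound. Once these two accounting points are pinned down, the geometric summation is routine.
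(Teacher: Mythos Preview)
Your plan uses the same ingredients as the paper---strict coverings, the per-round unique-node loss of Lemma~\ref{lem:cov}(ii), and the reduction step Lemma~\ref{lem:cov}(iii)---but the halving scheme you build on top of them introduces a gap that is not present in the paper's argument, and your summation paragraph is internally inconsistent.

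\textbf{The gap.} From a strict covering of size $k$, after $\Oh(n/k)$ rounds with the index set held fixed you can indeed argue that more than $k/2$ of the sets have lost all their unique nodes. But this does \emph{not} entitle you to drop $k/2$ sets. ``$S_i$ has no unique node'' only means every element of $S_i$ lies in some other $S_j$; two such sets can cover a node jointly that no other set covers, so removing both destroys the covering. Lemma~\ref{lem:cov}(iii) lets you peel off one redundant set, after which the surviving sets may acquire new unique nodes, and you have no control over how many of the previously empty sets become non-redundant. You flag exactly this as ``the main obstacle,'' but it is fatal to the halving bound: $\Oh(n/k)$ rounds guarantee only \emph{one} removal, not $k/2$. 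Consequently your phase bound should be $\Oh(n/k)$ rounds \emph{per unit decrease} of the covering size, not per halving.

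\textbf{The summation.} As written, $\sum_{j} \Oh(n/(n/2^j))=\Oh(\sum_j 2^j)=\Oh(n)$ is already the total over all phases, so multiplying by another $\log n$ factor is double-counting. If the $\Oh(n/k)$-per-halving claim were true, you would actually be proving $B^{\mathfrak{T}}=\Oh(n)$, which is the open conjecture, not the $\Oh(n\log n)$ fact.

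\textbf{How the paper avoids all this.} The paper does not attempt to halve. It simply observes: from a strict covering of size $x+1$ with $z\le n$ unique nodes, Lemma~\ref{lem:cov}(ii) forces at least $x$ unique-node losses per round, so after $\lceil z/x\rceil\le\lceil n/x\rceil$ rounds some set has no unique node and can be removed. One then \emph{restarts} the count with the smaller strict covering of size $x$ and the crude bound $z\le n$ again---no need to track gains. Summing $\sum_{x=1}^{n-1}\lceil n/x\rceil=\Oh(n\log n)$ gives the result directly. Your amortization worries evaporate once you go one set at a time and re-apply $z\le n$ at each step; that is the missing idea.
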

\begin{proof}
Let $\mathcal{C}_{I(r)}(r)$ be a strict covering of size $x+1$, and $z$ be the number of unique nodes
in $\mathcal{C}(r)$. Lemma~\ref{lem:cov}~\eqref{lem:cov:ii} ensures that, after $t:=\lceil \tfrac{z}{x} \rceil$ rounds, there
exists an influence set in $\mathcal{C}_{I(r)}(r+t)$ with no unique nodes, which can be removed.
The new strict covering $\mathcal{SC}_{I(r+t)}(r+t)$ resulting from this procedure is of size at most $x$.

Starting at $r=0$ and $x+1=n$ and using $z\leq n$, we can bound the dissemination time by
$
B^{\mathfrak{T}} \leq \sum_{i=1}^{n} \left\lceil \frac{n}{i} \right\rceil = \Oh(n\log n) 
$.
\end{proof}

If we restrict ourselves to rooted trees with 
a fixed number of leaves, 
it is possible to prove that data dissemination can be completed 
even in linear time.

\begin{theorem}\label{thm:kleaves}
For the class of rooted trees $\mathfrak{T}_{k-1}$ with exactly 
$k-1$ leaves, data dissemination is $B^{\mathfrak{T}_{k-1}}\leq k\cdot (n-3)+2$ 
rounds.
\end{theorem}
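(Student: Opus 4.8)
The plan is to mimic the $\Oh(n\log n)$ proof but exploit Lemma~\ref{lem:cov}~(iv) to get a \emph{linear} decrease in the covering size rather than the harmonic-sum decrease. The key observation is that with only $k-1$ leaves, Lemma~\ref{lem:cov}~(iv) tells us that in any round at most $k-1$ covering sets fail to grow; equivalently, all but at most $k-1$ of the current covering sets strictly increase in size each round. I would track a strict covering $\mathcal{SC}_{I(r)}(r)$ over time and argue that its size must drop quickly, because growing influence sets cannot all keep growing forever — once a set reaches size $n$ we are done by Lemma~\ref{lem:cov}~(v), and the total ``capacity'' for growth across all sets is bounded by roughly $n$ per set.

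\textbf{Setup and amortization.} First I would start, as in the $\Oh(n\log n)$ proof, at time $t=2$ with a covering all of whose sets have size at least $2$ (Lemma~\ref{lem:cov}~(vi)), so the initial covering has size at most $n-1$ (since each of the $n$ nodes lies in some set, but sets of size $\geq 2$ mean at most $\lfloor n/2\rfloor$ \emph{disjoint} ones — more carefully, I only need that the covering has size at most $n-1$ and each set has size $\geq 2$). The crux is the following amortized counting argument. Consider a strict covering of size $m$. By Lemma~\ref{lem:cov}~(iv), in each round at least $m-(k-1)$ of its sets grow by at least one element. Each set can grow at most $n-2$ times before reaching size $n$ (starting from size $\geq 2$), at which point dissemination is complete. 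So the \emph{total} number of growth-events, summed over all sets and all rounds, before a set reaches size $n$ is bounded.

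\textbf{Main estimate.} The heart of the argument is to bound the number of rounds during which the covering has size larger than one. Let me phrase it as a potential/charging argument: in each round, the number of growth steps is at least $m-(k-1)$ where $m$ is the current covering size. When a covering set stops being useful (loses all unique nodes) it can be removed by Lemma~\ref{lem:cov}~(iii), decreasing $m$. I would argue that over any window of rounds the covering size must decrease: either many sets grow (consuming their bounded growth budget of at most $n-2$ steps each), or the covering can be pruned. Combining, after $O(n)$ growth-consuming rounds a set reaches full size $n$, or the covering shrinks to a single set; the factor $k$ enters because up to $k-1$ sets may stall in any given round, so we may need to ``wait out'' roughly $k$ rounds per unit of genuine progress. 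The target bound $k\cdot(n-3)+2$ suggests the intended accounting is: start at time $2$, and each of the $\le n-2$ useful growth increments for the eventual full set costs at most $k$ rounds (because at most $k-1$ sets stall while the relevant root-containing set, guaranteed to grow by Lemma~\ref{lem:cov}~(vii), makes progress), giving $2 + k(n-3)$.

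\textbf{Expected main obstacle.} The delicate part is converting the per-round statement ``at most $k-1$ sets do not grow'' into a clean global bound of the claimed form, because the identity of the growing and stalling sets changes from round to round, and pruning (Lemma~\ref{lem:cov}~(iii)) only applies once a set has lost \emph{all} its unique nodes rather than merely stalled. I would have to show that a set which repeatedly stalls eventually loses its unique nodes and can be removed, and carefully account for the worst case in which the same $k-1$ sets stall as long as possible; the precise constant $k\cdot(n-3)+2$ will hinge on showing that we never pay more than a factor $k$ over the single-chain bound of $n-1$ rounds, ideally by focusing the analysis on one designated covering set (e.g.\ the one containing the current root, which grows every round) and bounding the extra slack contributed by the other $\le k-1$ leaf-associated sets.
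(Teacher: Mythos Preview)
Your proposal identifies the right ingredients (Lemma~\ref{lem:cov}~(iv), (vi), (vii)) but misses the key aggregation trick, and the alternative routes you sketch do not close. The paper does \emph{not} track the covering size, nor does it follow a single designated set. Instead it tracks the \emph{sum of the sizes of the $k$ smallest influence sets} in a strict covering. Since at most $k-1$ covering sets fail to grow in any round, every collection of $k$ sets contains at least one that grows; in particular the sum of the $k$ smallest increases by at least $1$ each round (if $J$ denotes the indices of the $k$ smallest at time $r{+}1$, then $J$ must contain a set that grew, and $\sum_{j\in J}|S_j(r)|$ already dominates the time-$r$ minimum). Starting from $\ge 2k$ at the initial time (all sets have size $\ge 2$) and noting that once the sum exceeds $k(n-1)$ some set has size $n$, one solves $2k+(r-1)=k(n-1)+1$ to get $r=k(n-3)+2$. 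No pruning analysis, no amortization over growth budgets, and no tracking of unique nodes is needed.

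Your concrete attempts each have a gap. Following ``the root-containing set'' does not work because the root changes every round, so no fixed set is guaranteed to grow in every round; Lemma~\ref{lem:cov}~(vii) only says \emph{some} set grows. The ``growth budget $n-2$ per set, at least $m-(k-1)$ sets grow per round'' accounting would require controlling the evolving covering size $m$ and does not obviously yield the constant $k(n-3)+2$; you would also need to keep the covering strict to invoke (iv), which forces pruning and changes $m$. Finally, your proposed detour ``a repeatedly stalling set eventually loses its unique nodes'' is both unproved and unnecessary. The obstacle you name in your last paragraph is exactly the missing idea: the sum of the $k$ smallest is the potential that converts the per-round statement into the global bound in one line.
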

\begin{proof}
Let $\mathfrak{SC}$ be a sequence of strict coverings with $I(r)\subseteq I(r+1)$ such that at time $r=2$ all influence sets are of size at least $2$ (see Lemma~\ref{lem:cov}~(\ref{lem:cov:ZA})). 
Let $S_{p_1}(r),\dots , S_{p_{k}}(r)$ be the $k$ smallest influence sets in $\mathfrak{SC}$ at time $r$ and
let $s_{p_i}(r)$ denote the size of $S_{p_i}(r)$. Due to Lemma~\ref{lem:cov}~(iv) in every round at least one of them grew by at least one, hence
\[
\sum_{i=1}^{k} s_{p_i}(r) \geq 2k+(r-1).
\]
Thus, if $2k+(r-1) = k(n-1)+1$ one set must contain $n$ elements and dissemination is done. Solving this equation for $r$ yields $r=(k(n-3)+2)$.
\end{proof}

In particular, in the special case of a directed chain graph (i.e., $k=2$), 
one can do even better:
The following theorem shows that the dissemination time is only $n-1$ rounds
in this case.
Since in the constant chain graph,
it takes the root $n-1$ rounds to disseminate its value, this bound is tight.

\begin{theorem}
Let $\mathfrak{C}$ be the class of directed chains.
At the end of round $r$, there exists a collection
$\mathcal{S}(r)=\{S_{p_1}(r),\dots ,S_{p_{n-r}}(r) \}$ of $n-r$ influence sets with 
\begin{compactenum} 
 \item $|S_{p_i}| \geq r+1$ for $1\leq i \leq n-r$, and
 \item for all $0 \leq r \leq n-1$,
\begin{equation}\label{equ:unionsize}
 \left| \bigcup_{i \in I} S_{p_i}(r) \right| \geq r+1 + |I|-1 \quad \forall \ I \subseteq [1,n-r] .
\end{equation}
\end{compactenum}
 Thus, $B^\mathfrak{C} \leq n-1$ rounds.
\end{theorem}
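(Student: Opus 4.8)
The plan is to establish the two displayed properties by induction on $r$; the bound $B^{\mathfrak{C}}\le n-1$ is then immediate, since at $r=n-1$ the collection consists of a single influence set which by the first property has size $\ge n$, hence equals $\Pi$, and Lemma~\ref{lem:infl}(iii) declares dissemination complete. Observe first that the first property is exactly the case $|I|=1$ of the second (for $|I|=1$ the right-hand side of~\eqref{equ:unionsize} is $r+1$), so it suffices to maintain~\eqref{equ:unionsize} for all \emph{nonempty} $I$ (for $I=\emptyset$ the inequality is false once $r\ge1$, so empty index sets are excluded throughout). The base case $r=0$ is trivial: take $\mathcal{S}(0)=\{S_p(0):p\in\Pi\}$, the $n$ singletons, for which $|\bigcup_{i\in I}S_{p_i}(0)|=|I|$.

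The entire content is the inductive step from $r$ to $r+1$. Given sets $A_1,\dots,A_m$ with $m=n-r$ satisfying~\eqref{equ:unionsize}, I want to discard exactly one of them and pass to round $r+1$ so that~\eqref{equ:unionsize} holds with $r$ replaced by $r+1$ over the remaining $m-1=n-(r+1)$ sets. I would relabel the nodes along the chain $G_{r+1}$ as $1\to2\to\cdots\to n$. By Lemma~\ref{lem:infl}(ii) the updated set is $A_i'=A_i\cup\{j+1:j\in A_i,\ j<n\}$, and since this growth operation commutes with union, $\bigcup_{i\in I}A_i'$ is simply the successor-closure of $\bigcup_{i\in I}A_i$. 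Hence $\bigcup_{i\in I}A_i$ fails to grow \emph{iff} it is successor-closed, i.e.\ a suffix $\{a,\dots,n\}$ of the chain. Now fix any candidate index $i_0$ to drop and take nonempty $I\subseteq[m]\setminus\{i_0\}$: if $\bigcup_{i\in I}A_i$ is not a suffix it grows strictly, so $|\bigcup_{i\in I}A_i'|\ge(r+|I|)+1$; if it is a suffix with $|\bigcup_{i\in I}A_i|\ge r+|I|+1$ it is already large enough even without growth. The \emph{only} dangerous families are therefore the \emph{tight suffix families}, i.e.\ those $I$ with $\bigcup_{i\in I}A_i$ a suffix and $|\bigcup_{i\in I}A_i|=r+|I|$. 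So the whole step reduces to choosing $i_0$ inside every tight suffix family simultaneously.

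The main obstacle, and the key lemma I expect to need, is that the tight suffix families possess a common index. I would prove this by showing they are totally ordered by inclusion. For a suffix $U$ put $J(U)=\{i:A_i\subseteq U\}$. If $I$ is a tight suffix family with $\bigcup_{i\in I}A_i=U$, then $I\subseteq J(U)$ and still $\bigcup_{i\in J(U)}A_i=U$, so~\eqref{equ:unionsize} gives $r+|I|=|U|\ge r+|J(U)|\ge r+|I|$, forcing $I=J(U)$. Thus the tight suffix families are exactly the nonempty sets $J(U)$. Since the suffixes of a chain are nested, $U\subseteq U'$ implies $J(U)\subseteq J(U')$, so the tight suffix families form a chain under inclusion. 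The full index set $[m]$ is always such a family (its union is $\Pi$, a suffix, of size $n=r+m$), so this chain is nonempty; its least element $J(U^\ast)$ is itself a nonempty tight family contained in all the others.

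Finally I would choose $i_0\in J(U^\ast)$. Then $i_0$ lies in every tight suffix family, so no dangerous family survives as a subset of $[m]\setminus\{i_0\}$, and by the case analysis above the remaining sets $\{A_i':i\ne i_0\}$ satisfy~\eqref{equ:unionsize} at level $r+1$. This produces $\mathcal{S}(r+1)$ of the required size $n-(r+1)$, closing the induction. The delicate point to get right is the uniqueness $I=J(U)$ of the realizing family and the resulting nesting; everything else is the routine growth-versus-suffix dichotomy.
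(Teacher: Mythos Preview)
Your proof is correct and shares with the paper's the same inductive skeleton and the same key observation: a union of influence sets fails to grow in round $r{+}1$ precisely when it occupies a suffix of the chain $G_{r+1}$. The organisation of the inductive step is different, though, and your version is the cleaner one. The paper builds $\mathcal{S}(r{+}1)$ by a sequential greedy procedure---adding the candidate sets one at a time, separating out the case where a minimal non-growing set $S_{p^*}$ exists, and then arguing by a somewhat involved case analysis (a first rejected index $L$, then a hypothetical second rejected index $L'$, with a comparison of the witness sizes $k$ and $k_1$) that at most one candidate can ever be rejected. You replace this with a single structural lemma: every obstructing (``tight suffix'') family equals $J(U)$ for its union $U$, the suffixes $U$ are nested, hence the tight families are totally ordered by inclusion, and any element of the least one lies in all of them. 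This makes the reason why ``one removal suffices'' completely transparent, and it avoids the paper's case split on the existence of $S_{p^*}$ altogether (that set, if it exists, is simply the $A_i$ with $\{i\}=J(U^\ast)$).

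One small expository quibble: the sentence ``the tight suffix families are exactly the nonempty sets $J(U)$'' overstates both what you proved and what you need. You only established the forward direction (every tight family is of the form $J(U)$ for its own union $U$), and a general nonempty $J(U)$ need not have union equal to $U$, nor need that union be a suffix. But your nesting argument uses only the forward direction, so the logic is unaffected.
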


\begin{proof}
We do it by induction on $r$. If $r=0$, then $S_p(0) = \{p\}$ and thus $|S_p(0)|=1$ for all $p$. Obviously, $ \left| \bigcup_{i \in I} S_{p_i}(0) \right| = |I|$. Assume that
the induction hypothesis holds for $r$. We will show the the stated assertion also holds for $(r+1)$.

We take successively a set $S_{p_i}(r+1)$ (where $S_{p_i}(r)$ was contained in $\mathcal{S}(r)$) and add it to $\mathcal{S}(r+1)$ iff the following two conditions hold:
\begin{itemize}
\item[(i)] $\left| S_{p_i}(r+1) \right| \geq r+2$ and
\item[(ii)] inequality~\eqref{equ:unionsize} holds for all collection of sets of $\mathcal{S}(r+1)$ and $S_{p_i}(r+1)$.
\end{itemize}

Note that condition~(i) holds for all but at most one set from $\mathcal{S}(r)$: A set of size $m$ does not grow iff it captures the last $m$ elements of the chain. Since
due to inequality~\eqref{equ:unionsize} all sets of size $(r+1)$ are pairwise different only one of these sets can be completely at the end of the chain, hence at most
one set does not grow. If such a set exists we denote it by $S_{p*}$.

Assume now that we have already added successfully $(L-1)$ sets to $\mathcal{S}(r+1)$ and that by adding $S_{p_L}(r+1)$ condition (ii) is violated, i.e., there exists
$k \leq (L-1)$ sets $S_{p_{i_1}}(r+1),\dots,S_{p_{i_k}}(r+1) \in \mathcal{S}(r+1)$ such that
\[
\left| \bigcup_{\ell = 1}^k S_{p_{i_{\ell}}}(r+1) \right| = r+1+k
\]
and
\[
 \left| \bigcup_{\ell = 1}^k S_{p_{i_{\ell}}}(r+1) \cup S_{p_L}(r+1) \right| = r+1+k.
\]
This means that $S_{p_L}(r+1) \subseteq \bigcup_{\ell = 1}^k S_{p_{i_{\ell}}}(r+1)$.

By induction hypothesis,
\[
\left| \bigcup_{\ell = 1}^k S_{p_{i_{\ell}}}(r) \cup S_{p_L}(r) \right| \geq r+1+k.
\]
Thus the set $\bigcup_{\ell = 1}^k S_{p_{i_{\ell}}}(r) \cup S_{p_L}(r)$ did not grow in round $(r+1)$, or equivalently, it captures the last $(r+1+k)$ elements of
the chain in round $(r+1)$. Firstly, in case of the existence of $S_{p*}$, also this set contains exactly the last $(r+1)$ elements, which gives
\[
\left| \bigcup_{\ell = 1}^k S_{p_{i_{\ell}}}(r) \cup S_{p_L}(r) \cup S_{p*}(r) \right| = r+1+k
\]
which is a contradiction to the induction hypothesis. Hence -- if $S_{p*}$ exists -- we can add all influence sets of $\mathcal{S}(r)$ but $S_{p*}$ to $\mathcal{S}(r+1)$ and
the assertion of this theorem holds in this case. On the other hand we are allowed to delete one set from our 'good' sets, so we do not add $S_{p_L}(r+1)$ to $\mathcal{S}(r+1)$.
The remaining question is: Can there be an index $L'>L$ such that again condition (ii) is violated? So assume that there is such an index and $k_1<L'$ sets
$S_{p_{i_1}},\dots,S_{p_{i_{k_1}}}$ with
\[
\left| \bigcup_{\ell = 1}^{k_1} S_{p_{i_{\ell}}}(r+1) \cup S_{p_{L'}}(r+1) \right| = r+1+k_1.
\]
Again, due to induction hypothesis, the set $\bigcup_{\ell = 1}^{k_1} S_{p_{i_{\ell}}}(r) \cup S_{p_{L'}}(r)$ captures the last $(r+1+k_1)$ elements of the chain
in round $(r+1)$. If $k_1\geq k$ then we take the $k_1$ influence sets from here together with $S_{p_L}(r)$ and $S_{p_{L'}}(r)$ and obtain
\[
\left| \bigcup_{\ell = 1}^{k_1} S_{p_{i_{\ell}}}(r) \cup S_{p_{L'}}(r) \cup S_{p_{L}}(r) \right| = r+1+k_1
\]
which is a contradiction. If $k_1 < k$ we take the $k$ influence sets from above together with $S_{p_L}(r)$ and $S_{p_{L'}}(r)$ again yielding a contradiction.
So the theorem is proven.
\end{proof}

But not only trees with a few number of leaves admit linear-time data-dissemination (see Theorem~\ref{thm:kleaves}), also trees  with only a few inner nodes do:
\begin{theorem}
 For trees with $\ell$ leaves we have 
$B^{\mathfrak{T}_{\ell}}\leq (n-\ell)(n-1)+2-\max(n,2(n-\ell))$.
In particular, in trees with only $k$ inner nodes (i.e., $(n-k)$ leaves), data-dissemination
 is linear. In fact, $B^{\mathfrak{T}_{n-k}}\leq k(n-1)+2-\max(n,2k)$.
\end{theorem}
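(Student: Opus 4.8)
The plan is to exploit the elementary fact that a rooted tree on $n$ nodes with $\ell$ leaves has exactly $m:=n-\ell$ inner nodes, so that the influence sets of the inner nodes of $G_1$ already form a \emph{small} covering, and then to run the same sum-tracking argument as in the proof of Theorem~\ref{thm:kleaves}, only with a larger (in fact, two-fold) starting value.

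First I would observe that at time $1$ the family $\{S_p(1): p\text{ is an inner node of }G_1\}$ is a covering of size $m=n-\ell$: every leaf $v$ of $G_1$ has an inner parent $u$ with $v\in S_u(1)$, while every inner node lies in its own influence set, so the union is $\Pi$; this is essentially the covering of Lemma~\ref{lem:cov}~\eqref{lem:cov:ZA}. Crucially, each of these $m$ sets already satisfies $|S_p(1)|\ge 2$, since an inner node has at least one outgoing edge in $G_1$ and hence its influence set contains the corresponding child. Fixing the index set $I$ to be the inner nodes of $G_1$, Lemma~\ref{lem:infl}~(iv) shows that $\mathcal{C}_I(r)$ remains a covering for every $r\ge 1$, because influence sets never shrink.

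Next I would track the total size $\Sigma(r)=\sum_{p\in I}|S_p(r)|$. By Lemma~\ref{lem:cov}~\eqref{lem:cov:onesetgrows} (applied via Lemma~\ref{lem:infl}~\eqref{lem:infl:rootgrows} to the covering set containing the current root), as long as dissemination is not yet complete at least one covering set grows in each round, so $\Sigma(r+1)\ge\Sigma(r)+1$. Using the two lower bounds $\Sigma(1)\ge 2m$ (from $|S_p(1)|\ge 2$) and $\Sigma(1)\ge|\Pi|=n$ (since the family covers $\Pi$), I obtain $\Sigma(r)\ge\max(2m,n)+(r-1)$ while no covering set has reached size $n$. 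On the other hand, if every covering set still has size at most $n-1$ then $\Sigma(r)\le m(n-1)$; hence as soon as $\Sigma(r)\ge m(n-1)+1$ some set must have size $n$, which by Lemma~\ref{lem:infl}~(iii) means dissemination is done. Solving $\max(2m,n)+(r-1)=m(n-1)+1$ for $r$ yields
\[
B^{\mathfrak{T}_{\ell}}\le m(n-1)+2-\max(2m,n)=(n-\ell)(n-1)+2-\max(n,2(n-\ell)),
\]
and the ``in particular'' claim follows by substituting $\ell=n-k$, so that $m=k$.

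I expect the only delicate point to be the identification of the initial covering in the first step, namely that the inner-node influence sets simultaneously achieve covering size $m$ \emph{and} pointwise size at least $2$; everything afterwards is the routine monotone-growth bookkeeping. The two branches of the $\max$ correspond exactly to which of the two starting estimates $\Sigma(1)\ge 2m$ (few leaves, $2m\ge n$) or $\Sigma(1)\ge n$ (many leaves, $n\ge 2m$) is binding, so no genuine case distinction is required beyond recording both. It is worth noting that this argument uses neither strictness of the covering nor the leaf-counting bound of Lemma~\ref{lem:cov}~(iv): the entire improvement stems from the initial covering being no larger than the number of inner nodes.
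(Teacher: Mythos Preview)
Your argument is correct and is precisely the paper's approach: take the covering of size $m=n-\ell$ consisting of the inner nodes' influence sets after round~$1$, each of size at least $2$, and then use Lemma~\ref{lem:cov}~\eqref{lem:cov:onesetgrows} to grow the total size by at least one per round until some set must hit $n$. You have simply made explicit the two lower bounds $\Sigma(1)\ge 2m$ and $\Sigma(1)\ge n$ that the paper leaves implicit in the $\max$ term.
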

\begin{proof}
 After the first round we can find a covering of size $(n-\ell)$ sets (by taking all influence sets except those of the leaves), all of size at least $2$. 
 By Lemma~\ref{lem:cov}~\eqref{lem:cov:onesetgrows}, in all following rounds at least of them must grow.
\end{proof}
\begin{remark}
 Note that in case of the star graph (i.e., $n-1$ leaves and $1$ inner node) this theorem indeed gives dissemination time of $1$.
\end{remark}

Turning back to general rooted trees $\mathfrak{T}$, the following theorem presents a lower
bound on the dissemination time. It reveals that, in the worst-case, it takes more time 
than in the case of chain graphs.

\begin{theorem}
For the class of rooted trees $\mathfrak{T}$, $B^\mathfrak{T} \geq 
\lceil \tfrac{3n-1}{2} \rceil-2$ rounds.
\end{theorem}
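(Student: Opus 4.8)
The plan is to establish the lower bound by constructing an explicit oblivious message adversary (i.e.\ a single repeated graph or a small set of graphs) on which dissemination provably requires at least $\lceil \tfrac{3n-1}{2}\rceil - 2$ rounds. Since the bound strictly exceeds the $n-1$ rounds sufficient for chains, the hard part is designing a tree sequence that genuinely slows propagation relative to a chain. My intuition is to use a tree that is ``fatter'' near the root and forces the influential node's value to travel along a long path, while simultaneously arranging that the node which eventually becomes universally known is \emph{not} the root of the early graphs, so that its influence set must first be seeded and only then grow.

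First I would recall from Lemma~\ref{lem:infl}~(iii) that $B_{\mathcal{G}} = \min\{r : \max_p |S_p(r)| = n\}$, so it suffices to exhibit a sequence $\mathcal{G}$ in which \emph{every} influence set $S_p(r)$ has size strictly less than $n$ for all $r < \lceil \tfrac{3n-1}{2}\rceil - 2$. Equivalently, by the definition of dissemination time, I need to show $\bigcap_{p} K_p(r) = \emptyset$ below this threshold. The natural candidate is a graph consisting of a short ``bottleneck'' structure feeding into a single long chain: concretely, take a node $p$ with two in-neighbours at the top, forming a small tree on the first few vertices, whose single output then proceeds down a path of length roughly $n/2$. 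The asymmetry costs extra rounds: a value originating at a leaf must first climb to the point where the tree merges before it can begin its linear descent, and no value can reach all $n$ nodes until it has both covered the branch it did not originate in and traversed the full chain.

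The key steps, in order, would be: (1) fix the candidate tree (or a pair of trees used to route differently in even/odd rounds, if a single static graph proves insufficient), and describe the induced edges precisely; (2) track the growth of influence sets $S_p(r)$ round-by-round using the updating rule of Lemma~\ref{lem:infl}~(ii), identifying for each $p$ the earliest round at which $S_p$ could possibly reach size $n$; (3) argue that the node whose influence set grows fastest is still blocked by the combination of the initial ``merge delay'' $d$ near the root and the chain length $n-d$, giving a total of roughly $d + (n-d) + \Theta(d)$ rounds where the extra $\Theta(d)$ term, optimized at $d \approx n/2$, yields the $\tfrac{3n-1}{2}$ coefficient; (4) conclude via Lemma~\ref{lem:infl}~(iii) that $\max_p|S_p(r)| < n$ for all $r$ below the claimed bound, and finally take the ceiling and subtract the $2$ that accounts for the initial round offset.

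I expect the main obstacle to be step (3): pinning down the exact constant $\tfrac32$ rather than merely some $c > 1$. A crude construction easily beats $n-1$, but matching $\lceil \tfrac{3n-1}{2}\rceil - 2$ requires the adversary to be balanced so that \emph{two} delays of comparable length $\approx n/2$ must be paid in sequence, and one must verify that no cleverer node escapes both delays simultaneously. I anticipate this forces a careful case analysis over which vertex could be the common element of all $K_p(r)$, and I would handle it by showing that any vertex either sits on the ``far'' side of the merge (paying the full chain traversal) or on the ``near'' side (paying to cover the opposite branch first), and that the minimum over all vertices of these two costs is still $\lceil \tfrac{3n-1}{2}\rceil - 2$.
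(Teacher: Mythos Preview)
Your proposal has a genuine gap at the level of the construction itself. A \emph{single static} rooted tree cannot possibly witness a dissemination time exceeding $n-1$: in any fixed rooted tree the root $\rho$ satisfies $|S_\rho(r)| \ge r+1$ until it hits $n$, so $S_\rho$ reaches size $n$ after at most $\mathrm{depth}(G) \le n-1$ rounds, regardless of how ``fat'' or bottlenecked the tree is elsewhere. Your candidate graph (a short merge feeding a long chain) therefore cannot beat $n-1$, and incidentally a node with ``two in-neighbours'' is not a rooted tree in the paper's sense, where edges point away from the root. Your fallback of alternating two trees in even/odd rounds is closer in spirit, but you give no concrete choice of trees, and the analysis you sketch in step~(3) (merge delay plus chain length) is still phrased as if the graph were static.

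The paper's construction is rather different from what you outline: it uses \emph{three phases}, each lasting roughly $n/2$ consecutive rounds, with a different graph (and crucially a different root) in each phase. Phase~1 is a chain rooted at $1$; phase~2 is a two-branch tree rooted at $n$; phase~3 is a chain rooted at $\lfloor n/2\rfloor$. The point is that no process is ever root long enough for its influence set to reach size $n$: whichever process $p$ you pick, at least one phase reorients the edges so that $S_p$ stalls or grows only along a branch that was already covered, and the total number of rounds before some $S_p$ finally reaches $n$ is $\lceil \tfrac{3n-1}{2}\rceil - 2$. Your high-level intuition that two $\approx n/2$ delays must be paid in sequence is correct, but the mechanism that enforces it is a time-varying root, not a spatial bottleneck inside a fixed tree; your plan as written does not contain this idea and would not produce the bound.
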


\begin{proof}

We will construct a specific sequence $\mathcal{G}$ of graphs with $B_{\mathcal{G}} = \lceil \tfrac{3n-1}{2} \rceil-2$.
This sequence consists of three different graphs $G^{(1)},G^{(2)},G^{(3)}$ where each graph
will be applied for multiple rounds.

The first graph, $G^{(1)}$, is the simple chain rooted in the process $1$ and edges
$i \rightarrow {i+1}$. 

The tree $G^{(2)}$ is rooted in $n$ and contains edges $(n \rightarrow 1),(n
\rightarrow {n-1})$. 
Furthermore $(i \rightarrow {i+1})$ for $1\leq i \leq \lfloor
\frac{n}{2}\rfloor-1$,
and $(i \rightarrow {i-1})$ for $n-1\geq i \geq \lfloor
\frac{n}{2}\rfloor+2$.

Finally, $G^{(3)}$ is rooted in $\lfloor \frac{n}{2} \rfloor$ with edges $i \rightarrow
{i+1}$ for $\lfloor \frac{n}{2} \rfloor\leq i \leq n-1$ and 
edges $i \rightarrow {i+1}$ for $1 \leq i \leq \lfloor \frac{n}{2}
\rfloor-1$.
Furthermore there is an edge $n \rightarrow 1$.

The execution (\cref{fig:counter}) is constructed in the following way:
\begin{itemize}
\item $G_r=G^{(1)}$ for $1\leq r \leq \lfloor\frac{n-1}{2} \rfloor$,
\item $G_r=G^{(2)}$ for $\lfloor\frac{n-1}{2} \rfloor+1 \leq r \leq n-2$, and
\item $G_r=G^{(3)}$ for $n-1\leq r \leq \lceil \frac{3n-1}{2} \rceil-2$.
\end{itemize}
In this sequence, the first time an influence set has size $n$ is the last
round. Hence $\lceil \frac{3n-1}{2}\rceil-2$ is a lower bound for broadcasting in directed
trees.

\newcommand{\impscale}{0.9}
\begin{figure*}[h]
 \centering
   \begin{subfigure}{0.05\linewidth}
     \scalebox{\impscale}{
       \begin{tikzpicture}
         \node at (0, 0.5) {$\sigma_1$:};
       \end{tikzpicture}
     }
   \end{subfigure}
   \begin{subfigure}{.16\linewidth}
     \scalebox{\impscale}{
       \begin{tikzpicture}
         \node (p1) at (0,0) {$1$};
         \node (p2) at (0,-0.7) {$2$};
         \node[rotate=90] (dots) at (0,-1.5) {\dots};
         \node (pn) at (0,-2.3) {$n$};

         \draw[->] (p1) -- (p2);
         \draw[->] (p2) -- (dots);
         \draw[<-] (pn) -- (dots);

         \node (left-paren)  at ( -.6, -1.1) {$\left( \vphantom{\rule{1pt}{44pt}} \right.$};
         \node (right-paren) at (1,-1.05) {$\left. \vphantom{\rule{1pt}{44pt}}
         \right)_{1}^{\lfloor\frac{n-1}{2}\rfloor}$};
       \end{tikzpicture}
     }
   \end{subfigure}
   \begin{subfigure}{.35\linewidth}
     \scalebox{\impscale}{
       \begin{tikzpicture}
         \node (pn) at (0,0) {$n$};
         \node (p1) at (-.8,-.8) {$1$};
         \node (pn-1) at (0.8,-.8) {${n-1}$};
         \node[rotate=90] (dots) at (0.8,-1.6) {\dots};
         \node[rotate=90] (dots2) at (-.8,-1.6) {\dots};
         \node (pn2) at (-.8,-2.5) {${\lfloor \frac{n}{2} \rfloor}$};
         \node (pn2+1) at (0.8,-2.5) {${\lfloor \frac{n}{2} \rfloor+1}$};

         \draw[->] (pn) -- (p1);
         \draw[->] (pn) -- (pn-1);
         \draw[->] (pn-1) -- (dots);
         \draw[->] (p1) -- (dots2);
         \draw[->] (dots) -- (pn2+1);
         \draw[->] (dots2) -- (pn2);

         \node (left-paren)  at ( -1.5, -1.2) {$\left( \vphantom{\rule{1pt}{44pt}} \right.$};
         \node (right-paren) at (2.3,-1.15) {$\left. \vphantom{\rule{1pt}{44pt}}
         \right)_{\lfloor\frac{n-1}{2}\rfloor+1}^{n-2}$};
       \end{tikzpicture}
     }
   \end{subfigure}
   \begin{subfigure}{.26\linewidth}
     \scalebox{\impscale}{
       \begin{tikzpicture}
         \node (p1) at (0,0) {${\lfloor \frac{n}{2} \rfloor}$};
         \node[scale=0.9] (p2) at (0,-0.7) {${\lfloor \frac{n}{2} \rfloor+1}$};
         \node[rotate=90] (dots) at (0,-1.5) {\dots};
         \node (pn) at (0,-2.3) {${\lfloor \frac{n}{2} \rfloor-1}$};

         \draw[->] (p1) -- (p2);
         \draw[->] (p2) -- (dots);
         \draw[<-] (pn) -- (dots);

         \node (left-paren)  at ( -0.9, -1.1) {$\left( \vphantom{\rule{1pt}{44pt}} \right.$};
         \node (right-paren) at (1.4,-1.05) {$\left. \vphantom{\rule{1pt}{44pt}}
         \right)_{n-1}^{\lceil\frac{3n-1}{2}\rceil-2}$};
       \end{tikzpicture}
     }
   \end{subfigure}\\
    \begin{subfigure}{0.05\linewidth}
     \scalebox{\impscale}{
       \begin{tikzpicture}
       \end{tikzpicture}
     }
   \end{subfigure}
   \begin{subfigure}{.35\linewidth}
     \scalebox{\impscale}{
       \begin{tikzpicture}
       \end{tikzpicture}
     }
   \end{subfigure}
   \begin{subfigure}{.26\linewidth}
     \scalebox{\impscale}{
       \begin{tikzpicture}
       \end{tikzpicture}
     }
   \end{subfigure} \caption{example execution
 }\label{fig:counter}
\end{figure*}
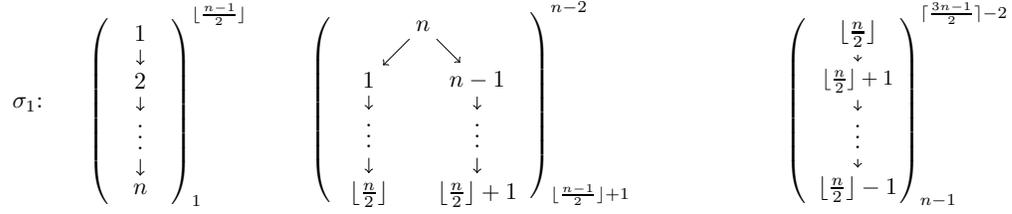
\begin{strip}
\begin{align*}
\texttt{for~} 1 \leq r\leq\lfloor \frac{n-1}{2}\rfloor:&S_i(r)=\{i,\dots,\min(r+i,n) \}\\
\texttt{for~} \lfloor\frac{n-1}{2}\rfloor<r\leq n-2:
&S_i(r)=\{i,\dots,\lfloor\frac{n-1}{2}\rfloor+i
\}\texttt{~for~}i\leq\frac{n}{2},\\
&S_i(r)=\{\max(\lfloor\frac{n}{2}\rfloor+1,i-(r-\lfloor\frac{n-1}{2}\rfloor)),\dots,
n,1,\dots,r-\lfloor\frac{n-1}{2}\rfloor \}\text{~for~}i>\frac{n}{2}\\
\texttt{for~} n-2<r\leq\lceil \frac{3n-1}{2}\rceil-2:
&S_i(r)=\{\max(\lfloor\frac{n}{2}\rfloor+1,i+2+\lfloor\frac{-n-1}{2}\rfloor)),\dots,
n,1,\dots,\lceil\frac{n+1}{2}\rceil-2 \}\texttt{~for~}i>\frac{n}{2},\\
&S_i(r)=\{i,\dots,\lfloor\frac{n-1}{2}\rfloor+i+r-(n-2) \}\texttt{~for~}i\leq
\lceil\frac{3n+1}{2}\rceil-2-r,\\
&S_i(r)=\{i,\dots,n,1,\dots, r-(n-2)\}\texttt{~for~}
\frac{n}{2}\geq i>\lceil\frac{3n+1}{2}\rceil-2-r)
\end{align*}
\end{strip}
\end{proof}

\section{Undirected Trees}\label{sec:undir}

In undirected graphs, dropping the direction of the
edges speeds-up data dissemination.
In the case where $\mathfrak{G}$ is the class of undirected and connected graphs, after $n-1$ rounds, even
all-to-all dissemination is completed (see \cite{KLO10}[Proposition 3.1]).
But what can be said about the dissemination time?
The following theorem shows that dissemination is twice as
fast as all-to-all dissemination in undirected chains, and 
roughly at least $3/2$ as fast as in directed rooted trees.

\begin{theorem}
Let $\mathfrak{C}_u$ be the class of undirected chains. At the end of 
round $r$, there exists a set $\mathcal{S}(r)$ of $n-2r$ influence
sets $S_{p_1}(r),\dots ,S_{p_{n-2r}}(r)$ with 
\begin{itemize}
 \item $|S_{p_i}| \geq 2r+1$ for $1\leq i \leq n-2r$, and
 \item for all $0 \leq r \leq (n-1)/2$
\begin{equation}\label{equ:unionsize_undir}
 \left| \bigcup_{i \in I} S_{p_i}(r) \right| \geq 2r+1 + |I|-1 \quad \forall \ I \subseteq [1,n-2r] .
\end{equation}
\end{itemize}
Thus, $B^{\mathfrak{C}_u} \leq \lceil (n-1)/2 \rceil$ rounds.
\end{theorem}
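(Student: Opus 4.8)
The plan is to run an induction on $r$ in exact parallel to the preceding directed-chain theorem, with every quantity that was linear there now carrying a factor of two. The base case $r=0$ is immediate: each $S_p(0)=\{p\}$ has size $1=2\cdot 0+1$, there are $n=n-2\cdot 0$ of them, and any union of $|I|$ singletons has size exactly $|I|=2\cdot 0+1+|I|-1$. For the inductive step I assume the collection $\mathcal{S}(r)$ of $n-2r$ sets with both properties and must extract, from the grown sets $S_{p_i}(r+1)$, a subcollection $\mathcal{S}(r+1)$ of $n-2(r+1)$ sets satisfying the properties with $r$ replaced by $r+1$. The single structural fact replacing the directed ``a set of size $m$ does not grow iff it captures the last $m$ elements'' is the following observation about the undirected path $G_{r+1}$: since the path is connected, every proper nonempty vertex set grows by at least one, and a short case analysis on the runs of the complement shows it grows by \emph{exactly} one only if it is a prefix interval, a suffix interval, or the path minus a single interior vertex (the last case forces size $n-1$ and is deferred to the endgame). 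This is exactly why the undirected bound improves by two per round: a generic set grows at both of its ends.

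For property~(i) I would note that, among the $n-2r$ grown sets, the only ones that can fail to reach size $2(r+1)+1=2r+3$ are those of size exactly $2r+1$ that grew by at most one. By the structural fact these are precisely the sets coinciding at time $r$ with the unique prefix of $G_{r+1}$ of size $2r+1$ or with the unique suffix of that size; hence there are at most two, which I name $P$ and $Q$. Discarding $P$ and $Q$ leaves at least $n-2(r+1)$ sets, all of size at least $2r+3$, and trimming to exactly $n-2(r+1)$ of them yields property~(i) (sub-selection can only help, since the union inequality is inherited by subcollections).

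The real work is property~(ii) for the surviving family, where I adapt the greedy/contradiction scheme of the directed proof. Suppose $\left|\bigcup_{\mathcal{F}}S_{p_i}(r+1)\right|\ge 2(r+1)+|\mathcal{F}|$ failed for some minimal subfamily $\mathcal{F}$ avoiding $\{P,Q\}$, so the union $U_{r+1}$ has size at most $2r+1+|\mathcal{F}|$; since $U_{r+1}$ is the one-round growth of $U_r=\bigcup_{\mathcal{F}}S_{p_i}(r)$ and the hypothesis forces $|U_r|\ge 2r+|\mathcal{F}|$, the union grew by at most one, so $U_r$ is a prefix or a suffix interval of $G_{r+1}$ (the size bound $2r+|\mathcal{F}|\le n-2$ excludes the interior-vertex case). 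In the prefix case I would mimic the directed argument at the left end: if $P$ is one of our sets (the generic situation) it lies inside every prefix-anchored tight $U_r$, and adjoining it at time $r$ over-counts, contradicting the hypothesis, so deleting $P$ kills all left-end obstructions at once; and if no set equals $P$, then exactly as in the directed case two prefix-anchored violations would be nested and combine with their witnessing sets to overfill the left end and contradict the hypothesis, so a single deletion suffices. The right end is symmetric with $Q$. The hard part is the cross-end bookkeeping: I must verify that a prefix-anchored and a suffix-anchored obstruction never share a minimal $\mathcal{F}$ (a prefix of size $a$ and a suffix of size $b$ jointly occupy $\min(n,a+b)$ vertices, too many to be tight), so that the two ends are charged independently, each costing at most one deletion and together matching exactly the two sets removed for property~(i).

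Finally I would close with the round count. After $r$ rounds property~(i) provides $n-2r$ influence sets of size at least $2r+1$. When $n$ is odd and $r=(n-1)/2$ there is $n-2r=1$ set of size $2r+1=n$, so dissemination is complete; when $n$ is even, at $r=n/2-1$ two sets of size at least $n-1$ remain, and each, being a proper nonempty subset of the connected graph $G_{n/2}$, grows to size $n$ in one further round (cf.\ the connectivity argument in the proof of Lemma~\ref{lem:infl}). In both cases $B^{\mathfrak{C}_u}\le\lceil (n-1)/2\rceil$. I expect the delicate point throughout to be keeping the number of deletions at exactly two, i.e.\ establishing that prefix- and suffix-type obstructions are mutually independent and individually unique, since this is precisely where the undirected accounting must sharpen the single-deletion argument of the directed case.
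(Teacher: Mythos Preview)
Your proposal is correct and follows essentially the same inductive strategy as the paper: at most two influence sets per round can fail to grow by two (those anchored at the ends of the current path), and any surviving violation of the union inequality forces the offending union to be end-anchored, whence adjoining the corresponding end set (or nesting two same-side violations) contradicts the hypothesis at time~$r$. Your organization is cosmetically different---you discard $P,Q$ first and argue per side, while the paper runs a greedy addition with a three-way case split on how many end sets exist---but the core contradiction is identical; if anything you are slightly more careful in naming the third growth-by-one case ($\Pi\setminus\{v\}$ for interior $v$) and excluding it by size, and in spelling out the parity-dependent endgame, both of which the paper leaves implicit.
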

Note that this bound is also tight, as the constant chain graph reveals.

\begin{proof}
The proof runs along the same lines as the proof for the analogous result for rooted chains and uses induction on $r$. For $r=0$ it is obvious true. 

Now let's do the induction step: Again, we take successively a set $S_{p_i}(r+1)$ (where $S_{p_i}(r)$ was contained in $\mathcal{S}(r)$) and add it to 
$\mathcal{S}(r+1)$ iff the following two conditions hold:
\begin{itemize}
\item[(i)] $\left| S_{p_i}(r+1) \right| \geq 2r+3$ and
\item[(ii)] inequality~\eqref{equ:unionsize_undir} holds for all collection of sets of $\mathcal{S}(r+1)$ and $S_{p_i}(r+1)$.
\end{itemize}

Here condition~(i) holds for all but at most two sets from $\mathcal{S}(r)$: A set grows by exactly $1$ iff it is located at one of the ends of the chain. Since
all sets of size $(2r+1)$ are pairwise different at most two such sets grow by at most $1$. If such sets exists we denote them by $S_{p_i^*}$, $i \in \{1,2\}$.

Assume now that we have already added successfully $(L-1)$ sets to $\mathcal{S}(r+1)$ and that by adding $S_{p_L}(r+1)$ condition (ii) is violated, i.e., there exists
$k \leq (L-1)$ sets $S_{p_{i_1}}(r+1),\dots,S_{p_{i_k}}(r+1) \in \mathcal{S}(r+1)$ such that
\[
\left| \bigcup_{\ell = 1}^k S_{p_{i_{\ell}}}(r+1) \cup S_{p_L}(r+1) \right| \leq 2r+k+2.
\]

By induction hypothesis,
\[
\left| \bigcup_{\ell = 1}^k S_{p_{i_{\ell}}}(r) \cup S_{p_L}(r) \right| \geq 2r+1+k.
\]
Thus the set $\bigcup_{\ell = 1}^k S_{p_{i_{\ell}}}(r) \cup S_{p_L}(r)$ grew only by 1 in round $(r+1)$, or equivalently, it captures the last $(2r+1+k)$ elements of one end of
the chain in round $(r+1)$. 

Firstly, in case of the existence of $S_{p_i^*}$ at the same end, also this set contains exactly the last $(2r+1)$ elements, which gives
\[
\left| \bigcup_{\ell = 1}^k S_{p_{i_{\ell}}}(r) \cup S_{p_L}(r) \cup S_{p_i^*}(r) \right| = 2r+1+k
\]
which is a contradiction to the induction hypothesis. 
Hence -- if two sets $S_{p_i^*}$ exists -- we can add all influence sets of $\mathcal{S}(r)$ but $S_{p_i^*}$ to $\mathcal{S}(r+1)$ and
the assertion of this theorem holds in this case. 

Secondly, if exactly one set $S_{p*}$ exists (at the opposite end) we are allowed to delete one set from our sets fulfilling condition~(ii), so we do not add $S_{p_L}(r+1)$ to $\mathcal{S}(r+1)$.
If there would be an index $L'>L$ such that again condition (ii) is violated, there would be also $k_1<L'$ sets
$S_{p_{i_1}}(r+1),\dots,S_{p_{i_{k_1}}}(r+1)$ with
\[
\left| \bigcup_{\ell = 1}^{k_1} S_{p_{i_{\ell}}}(r+1) \cup S_{p_{L'}}(r+1) \right| = 2r+3+k_1.
\]
Again, the set $\bigcup_{\ell = 1}^{k_1} S_{p_{i_{\ell}}}(r) \cup S_{p_{L'}}(r)$ captures the last $(2r+1+k_1)$ elements of the same end of the chain
in round $(r+1)$. If $k_1\geq k$ then we take the $k_1$ influence sets from here together with $S_{p_L}(r)$ and $S_{p_{L'}}(r)$ and obtain
\[
\left| \bigcup_{\ell = 1}^{k_1} S_{p_{i_{\ell}}}(r) \cup S_{p_{L'}}(r) \cup S_{p_{L}}(r) \right| = 2r+1+k_1
\]
which is a contradiction. If $k_1 < k$ we take the $k$ influence sets from above together with $S_{p_L}(r)$ and $S_{p_{L'}}(r)$ again yielding a contradiction.

Thirdly, In case of absence of sets $S_{p*}$ we are allowed to delete two sets from $\mathcal{S}(r)$. Since by the argument above at one end at most one set violates condition (ii) we are done.
So the theorem is proven.
\end{proof}

\section{Acknowledgments}
We would like to thank our colleague Kyrill Winkler for several useful discussions on this topic.

\section{Outlook}
\label{sec:outlook}
We presented a number of lower and upper bounds for the dissemination time
in dynamic networks under oblivious message adversaries, where the set of
admissible graphs is restricted to contain trees or fixed substructures 
of trees.
For rooted directed trees, the best upper bound is $O(nlog(n))$ and the
best lower bound is $\Omega(n)$.
Hence, it is still an open question whether the dissemination time is
indeed linear or not.
Besides our interest in finally closing this question, we are wondering
whether the worst case dissemination time is somehow connected to the 
maximum path length, diameter or the maximum node degree of the individual
communication graphs or their dynamic transitive closure in the graph
sequence. Moreover, it remains to be seen whether our chain upper bound 
also holds for undirected trees.
\newpage
\bibliographystyle{plain}
\bibliography{../../paper,../../lit_bib/lit,../../lit_bib/journalstrings_abbr,../../Lit_Broadcasting_Rand_Dyn_Graphs}

\end{document}